\newtheorem{lemma}{Lemma}
\newtheorem{proposition}{Proposition}
\newtheorem{corollary}{Corollary}
\newtheorem{fact}{Fact}
\newtheorem{remark}{Remark}
\newtheorem{assumption}{Assumption}
\def\begcen{\begin{center}}
\def\endcen{\end{center}}
\newcommand{\bfz}{\mbox{$z$}}
\def\bfz{{\bf z}}
\newcommand{\col}{ \mbox{col} }
\def\calp{{\cal P}}
\def\hal{{1 \over 2}}
\def\L2{{\cal L}_2}
\def\L2e{{\cal L}_{2e}}
\def\rea{\mathds{R}}
\def\begmat#1{\begin{bmatrix}#1\end{bmatrix}}
\def\begalis#1{\begin{align*}{#1}\end{align*}}
\def\begequarr{\begin{eqnarray}}
\def\endequarr{\end{eqnarray}}
\def\begequarrs{\begin{eqnarray*}}
\def\endequarrs{\end{eqnarray*}}
\def\begarr{\begin{array}}
\def\endarr{\end{array}}
\def\begequ{\begin{equation}}
\def\endequ{\end{equation}}
\def\lab{\label}
\def\begdes{\begin{description}}
\def\enddes{\end{description}}
\def\begenu{\begin{enumerate}}
\def\begite{\begin{itemize}}
\def\endite{\end{itemize}}
\def\endenu{\end{enumerate}}
\def\lef[{\left[\begin{array}}
\def\rig]{\end{array}\right]}
\def\begcen{\begin{center}}
\def\endcen{\end{center}}
\def\begrem{\begin{remark}\rm}
\def\endrem{\end{remark}}
\def\begassum{\begin{assumption}}
\def\endassum{\end{assumption}}
\def\begassums{\begin{assumption*}}
\def\endassums{\end{assumption*}}
\def\begassu{\begin{ass}}
\def\endassu{\end{ass}}
\def\beglem{\begin{lemma}}
\def\endlem{\end{lemma}}
\def\begcor{\begin{corollary}}
\def\endcor{\end{corollary}}
\def\begfac{\begin{fact}}
\def\endfac{\end{fact}}
\def\AUT{{\it Automatica}}
\def\ba{\begin{array}}
\def\ea{\end{array}}
\def\begsubequ{\begin{subequations}}
\def\endsubequ{\end{subequations}}
\def\bfpsi{{\boldsymbol{\Psi}}}
\def\bfeps{{\boldsymbol \epsilon}}
\def\bfz{{\bf Z}}
\def\calp{{\mathfrak p}}
\title{An Observer-Based Composite Identifier for Online Estimation of the Th\'evenin Equivalent Parameters of a Power System}
\author{Daniele Zonetti, Romeo Ortega, Rafael Cisneros, Alexey Bobtsov, Fernando Mancilla-David, Oriol Gomis-Bellmunt\thanks{This work was supported by the Ministry of
Science and Higher Education of Russian Federation, passport of goszadanie no. 2019-0898 and by FEDER/Ministerio de Ciencia, Innovaci\'{o}n y Universidades-Agencia Estatal de Investigaci\'{o}n, Project RTI2018-095429-B-I00. The work of O. Gomis-Bellmunt is supported by the ICREA Academia program.}
\thanks{D. Zonetti and O. Gomis-Bellmunt are with the Centre  d'Innovaci\'{o} Tecnol\`{o}gica en Convertidors Est\`{a}tics i Accionaments, Departament d'Enginyeria El\`{e}ctrica, Universitat Polit\`{e}cnica de Catalunya, Barcelona 08028, Spain. (e-mail: daniele.zonetti(oriol.gomis)@upc.edu). }
\thanks{R. Ortega and R. Cisneros are with the Departamento Acad\'{e}mico de Sistemas
Digitales, ITAM, Rio Hondo 1, Col. Progreso Tizapan, 01080 Ciudad de
M\'{e}xico, Mexico (email: romeo.ortega(rcisneros)@itam.mx).}
\thanks{A. Bobtsov is with the Department of Control Systems and Robotics, ITMO University, Kronverkskiy av. 49, Saint-Petersburg, 197101, Russia (e-mail: bobtsov@mail.ru).}
\thanks{F. Mancilla--David is with the Department of Electrical Engineering, University of Colorado Denver, Denver, Colorado 80204, USA, (e-mail: Fernando.Mancilla-David@ucdenver.edu)}}
\begin{document}

\maketitle
\begin{abstract}
    We consider a Thévenin equivalent circuit capturing the dynamics of a power grid as seen from the point of common coupling with a power electronic converter, and provide a solution to the problem of \textit{online} identification of the corresponding circuit parameters. For this purpose, we first derive a linear regression model in the conventional $\tt abc$ coordinates and next design a bounded observer-based composite identifier that requires local measurements and knowledge of the grid frequency only. An extension that guarantees exponential convergence of the estimates, under the additional assumption of knowledge of the grid X/R ratio, is further provided. The performance of the proposed identifier, which subsumes a conventional gradient descent algorithm, is illustrated via detailed computer simulations.
\end{abstract}
\section{Introduction}
\lab{sec1}
With the widespread penetration of power electronic converters in the existing power systems, the problem of establishing an accurate model that is representative of the grid dynamics for the purpose of analysis and of control design is timely and increasingly relevant \cite{GOMIS}. The conventional approach is to use a Thévenin equivalent (TE) circuit to represent the overall grid as seen from the individual bus where the power converter is interfaced. The theoretical basis standing behind this concept is provided by the TE theorem, stating in its original version that any linear circuit constituted by current and/or voltage sources and resistances can be described via an equivalent circuit characterized by a voltage source combined in series with a resistance---a formulation that has been later extended to single-phase and three-phase AC circuits, with the notion of resistance being replaced by the notion of impedance~\cite{DORFbook}. This representation is suited for the analysis of a variety of key issues for a correct and safe operation of the power converter, ranging from  fault studies~\cite{Shen}, voltage stability analysis~\cite{Sun}, loadability limit computation~\cite{Peng} to tuning of the related controllers~\cite{Raghami}. In these studies it is emphasized that, to make the problem tractable in practice, the identification of the TE parameters shall be accomplished using only local measurements of voltage and currents.\\
The problem of online estimation of the TE parameters has been widely studied in the literature mainly using least-squares or extended Kalman filter techniques. As is well-known, the excitation requirements of these estimators, namely persistent excitation \cite{SASBODbook}, is quite stringent and hard to satisfy (without probing signals) in the current application. Another disadvantage of these methods is that---if the excitation conditions are satisfied---the adaptation gain converges to zero losing the alertness of the estimator. In \cite{ARAetal} a variation of the least-squares method proposed 32 years ago in \cite{ORT}, that converges in finite time, was used for this problem. Although the excitation conditions are weak the algorithm---that in its initial stage is akin to an off-line estimator---involves a numerically sensitive matrix inversion and, as it converges to a standard least-squares, loses its alertness.\\

The contributions of the paper are as follows. In Section \ref{sec2} we discuss the fundamental assumptions behind the formulation of the mathematical model of a power converter interfaced to a TE circuit. Then, in Section \ref{sec3} we derive a LRE that is instrumental for the design of a bounded  observer-based composite identifier, a task that is carried-out in Section \ref{sec4}. By employing a further assumption, we next generate in Section~\ref{sec5} an alternative, reduced LRE, for which a conventional gradient descent algorithm can be applied to ensure exponential convergence.  The usefulness of the theoretical results are illustrated via simulations in Section \ref{sec6}. We conclude the paper in Section~\ref{sec7} with some final remarks and guidelines for future investigation.\smallbreak


\noindent {\bf Notation.}  Given a vector $a \in \rea^n$, we denote the square of the Euclidean norm as $|a|^2:=a^\top a$. The symbol $\mathsf{1}_3$ denotes a three-dimensional vector of ones. Given a differentiable signal $u(t) \in \rea^r$, we define the derivative operator \mbox{${du(t)\over dt}=:\calp [u(t)]$} and denote the action of an LTI filter $F(\calp) \in \rea(\calp)$  as $F(\calp)[u(t)]$. For given $\omega>0$, $\varphi\in\mathbb{R}$, we denote the vector
$$
\mathcal{S}_\varphi (t):={\sqrt{\frac{2}{ 3}}} \begin{bmatrix}
\sin(\omega t+\varphi)\\
\sin(\omega t+\varphi-\frac{2}{3}\pi)\\
\sin(\omega t+\varphi+\frac{2}{3}\pi)
\end{bmatrix}. 
$$
%
\section{Assumptions}
\lab{sec2}
The mathematical model, in $\tt abc$ reference frame, of the TE of a three-phase, symmetrical configured system---see Fig. \ref{fig:scheme} for the corresponding circuit schematic---is given by\footnote{A
three-phase AC electrical system is said to be symmetrically
configured if a symmetrical feeding voltage yields a symmetrical
current and vice versa~\cite{DORFbook}.}
\begin{equation}\label{sysabc}
L \frac{d}{dt}i =   -R i + v - e,
\end{equation}
where: $i(t)\in\mathbb{R}^3$, $e(t) \in\mathbb{R}^3$ denote the three--phase current and equivalent voltage of the grid, respectively; $v(t)\in\mathbb{R}^3$ denotes the three-phase voltage synthesized by the converter at the point of common coupling (PCC); $L\in\mathbb{R}_{>0}$ and $R\in\mathbb{R}_{>0}$  denote respectively the grid equivalent inductance and resistance.\smallbreak

\begin{figure}
\centering
\includegraphics[scale=0.2]{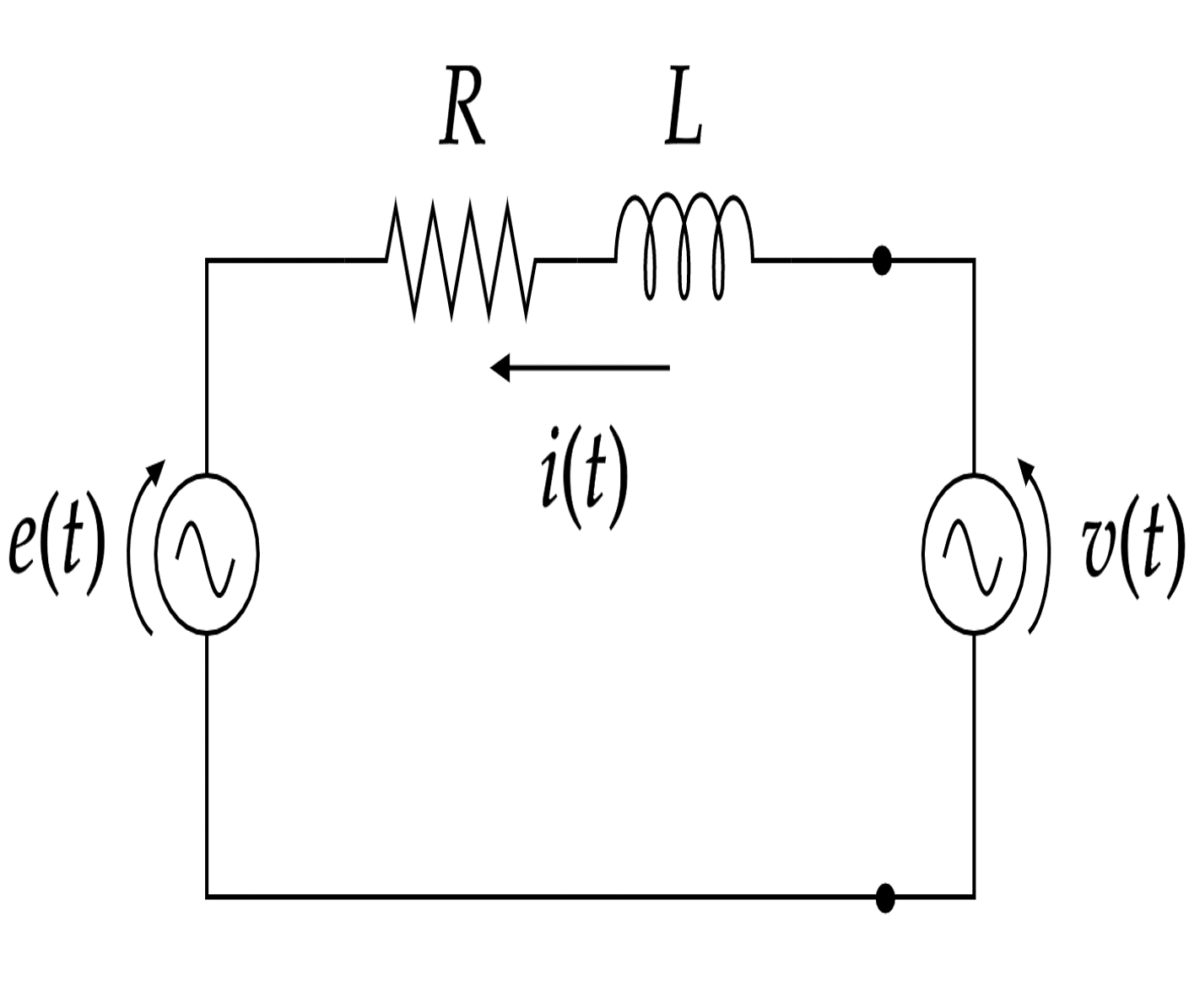}
\caption{Single-line circuit schematic of a power converter connected to a TE grid.}
\label{fig:scheme}
\end{figure}

\begin{assumption}[Grid voltage]
\label{ass1} 
The TE voltage source is described by a three-phase balanced purely sinusoidal signal:
\begin{equation}
\lab{vg}
e = E \mathcal{S}_{0},
\end{equation}
with $E>0$.
\end{assumption}\smallbreak

\begin{assumption}[Parameters]
\lab{ass2}
The frequency $\omega>0$ is \textit{known} and the positive parameters $L$, $R$, $E$ are \textit{unknown} constants.
\end{assumption} \smallbreak

\begin{assumption}[Measurements]
\lab{ass4}
The three-phase signals $i(t)\in \rea^3$ and $v(t)\in \rea^3$ are {\em measurable}.
\end{assumption}\smallbreak

Assumption~\ref{ass1} is justified for AC grids that are characterized by a sufficiently large number of synchronous rotating machines, providing large inertia, and/or whenever a sufficiently tight regulation of the frequency is implemented via grid-forming controllers.
Assumption \ref{ass2} can be instead enforced via the design of a suitably defined phase-locked loop (PLL), which allows to recover the exact value of the grid frequency at a faster time-scale.  Assumption~\ref{ass4} is quite natural and verified in all practical scenarios, as those measurements are utilized for the control of the converter.\\

For an appropriate modeling of the three-phase voltage synthesized by the converter at the PCC a caveat is necessary. In order to guarantee a correct operation of the power system, the state and input variables  $\{i,v\}$ are required to converge to suitable balanced, three-phase AC signal with the common grid frequency $\omega$, but different amplitudes and phase shifts. These operating conditions are enforced by the converter control algorithms, which are usually based on a $\tt dq$ reference frame ensuring that the steady-states of interest, in new coordinates, correspond to constant quantities.  Instrumental for the definition of such reference frame is the knowledge of the grid frequency $\omega$. Indeed, using this information the overall system composed by the TE circuit and the power converter---transformed in $\tt dq$ coordinates---is linear time-invariant (LTI) and adequate control solutions can be established---see for example~\cite{ZONBOBORT}. The design of an exponentially stabilizing controller legitimizes then the following assumption. \smallbreak

\begin{assumption}[PCC voltage]
\label{ass3} 
The voltage at the PCC is described by the three-phase signal
\begin{equation}\lab{v}
v = V \mathcal{S}_\phi+\bfeps_{t} ,
\end{equation}
with $V>0$, $\phi\in(-\frac{\pi}{2},\frac{\pi}{2})$, where $\bfeps_{ t}\in\rea^3$ is a signal exponentially decaying to zero.
\end{assumption}\smallbreak

\begin{remark}
A different steady-state amplitude $V$, phase $\phi$ and exponential term $\bfeps_{t}$ is triggered in \eqref{v} \textit{any time} a change occurs in the parameters of the TE or in the references provided to the power converter controllers.  The rapidity at which the signal $\bfeps_{t}$ vanishes is determined by the tuning of the converter control algorithms.
\end{remark} 

\section{Derivation of a LRE}
\lab{sec3}
Following the standard procedure for parameter estimation \cite{LJUbook,SASBODbook}, we derive a LRE for the system \eqref{sysabc}, as explained in the following lemma.\smallbreak
%
\begin{lemma} 
\lab{lem1}
Consider the system \eqref{sysabc} verifying  {\bf Assumptions A1-A4}. There exist {\em measurable} signals  $\bfz(t)\in\rea^3$ and $\bfpsi_f(t) \in\rea^{3 \times 3}$ such that the following LRE is satisfied
\begin{equation}\lab{lre}
\bfz=\bfpsi_f \theta + \bfeps_t,
\end{equation}
where
\begin{equation}\lab{barthe}
\theta:=\col\left({R \over L}, {1 \over L}, {E \over L} \right)\in \rea^3,
\end{equation}
and $\bfeps_t\in\rea^3$ is a signal exponentially decaying to zero.
\end{lemma}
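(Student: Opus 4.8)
The plan is to turn the first-order dynamics \eqref{sysabc} into a form that is \emph{affine} in the unknown parameter vector $\theta$ of \eqref{barthe}, and then to remove the only non-implementable object in it, the time derivative of the current, by passing the whole identity through a stable, proper LTI filter. First I would divide \eqref{sysabc} by $L$ and substitute the grid-voltage model $e = E\mathcal{S}_0$ of Assumption~A1, obtaining the \emph{exact} identity $\calp[i] = -\frac{R}{L}\,i + \frac{1}{L}\,v - \frac{E}{L}\,\mathcal{S}_0$. Reading the coefficients as the entries of $\theta=\col(R/L,1/L,E/L)$, this is $\calp[i]=\Phi\,\theta$ with the $3\times 3$ regressor $\Phi:=[\,-i,\ v,\ -\mathcal{S}_0\,]$. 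The decisive observation here is that every column of $\Phi$ is available: $i$ and $v$ are measurable by Assumption~A4, and $\mathcal{S}_0$ is a known signal because $\omega$ is known by Assumption~A2 (the grid-voltage phase being the chosen reference). Thus only the left-hand side $\calp[i]$ is not directly implementable.

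Next I would apply the stable filter $H(\calp):=\frac{\lambda}{\calp+\lambda}$, $\lambda>0$, to both sides. By linearity, and since $\theta$ is constant (Assumption~A2) so that it commutes with the filter, this gives $H(\calp)[\calp i]=\bfpsi_f\,\theta$, where $\bfpsi_f:=H(\calp)[\Phi]=[\,-H(\calp)[i],\ H(\calp)[v],\ -H(\calp)[\mathcal{S}_0]\,]$ is measurable. For the left-hand side I would use the biproper decomposition $\frac{\lambda\calp}{\calp+\lambda}=\lambda\big(1-\frac{\lambda}{\calp+\lambda}\big)$ to define the measurable output $\bfz:=\lambda\big(i-H(\calp)[i]\big)$, which realizes $H(\calp)[\calp i]$ without ever differentiating $i$.

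Finally I would certify the residual. Writing $H(\calp)[\calp i]$ in its convolution form and integrating by parts produces exactly the filter initial-condition transient $\lambda e^{-\lambda t} i(0)$, so that $\bfz = H(\calp)[\calp i] + \lambda e^{-\lambda t} i(0) = \bfpsi_f\,\theta + \lambda e^{-\lambda t} i(0)$. Setting $\bfeps_t:=\lambda e^{-\lambda t} i(0)$ yields \eqref{lre} with an exponentially decaying residual whose rate $\lambda$ is a free design parameter. Note that the exponentially vanishing component of $v$ from Assumption~A3 is carried consistently inside the measured column $H(\calp)[v]$ of $\bfpsi_f$; had one instead replaced $v$ by its steady state $V\mathcal{S}_\phi$, its filtered transient would simply be added to $\bfeps_t$, which remains exponentially decaying.

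The main obstacle is not any single computation but the careful bookkeeping of these transient terms: one must make the operator manipulations rigorous \emph{modulo} signals that decay exponentially, and simultaneously guarantee that the constructed $\bfz$ and $\bfpsi_f$ depend on measured or known quantities only. This last requirement is precisely where knowledge of the frequency $\omega$, and hence computability of $\mathcal{S}_0$, is essential, since it is what keeps the $E/L$ column of $\bfpsi_f$ implementable and the parameter count equal to three.
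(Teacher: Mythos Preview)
Your proposal is correct and follows essentially the same route as the paper: rewrite \eqref{sysabc} as $\calp[i]=\Phi\,\theta$ with $\Phi=[-i\;|\;v\;|\;-\mathcal{S}_0]$, then apply the stable first-order filter $\frac{\lambda}{\calp+\lambda}$ to both sides and define $\bfz$ and $\bfpsi_f$ as the filtered left- and right-hand sides. Your treatment is in fact more explicit than the paper's, which simply sets $\bfz:=\calp F(\calp)[i]$ without spelling out the biproper realization $\lambda(i-F(\calp)[i])$ or identifying the exponentially decaying residual with the filter's initial-condition transient.
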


\begin{proof}
The equation \eqref{sysabc} may be written as
\begin{equation}\lab{doti}
\frac{d}{dt}i =\bfpsi \theta,
\end{equation}
where we defined the three-dimensional square matrix
\begequ
\lab{pfpsi}
\bfpsi:=\begmat{-i &| & v & | & -\mathcal{S}_0 }.
\endequ
The proof is completed applying to \eqref{doti} the LTI, stable filter
$$
	F(\calp)={\lambda \over \calp+\lambda},
$$
with $\lambda>0$ and defining
\begalis{
\bfz(t) & :=\calp F(\calp)[i(t)],\\
\bfpsi_f(t)&:= F(\calp)[\bfpsi(t)].
}
\end{proof}

\begin{remark}\label{rem:not-PE}
Since we assumed that the power system configuration is symmetric, the sum of three phases of both the grid current and the voltage at the PCC is always zero, that is:
\begin{equation}\label{eq:sum-phases}
\mathsf{1}_3^\top i_{g}(t)=\mathsf{1}_3^\top v(t)=\mathsf{1}_3^\top \mathcal{S}_0(t)=0,\quad \forall t\geq 0.    
\end{equation}
A consequence of this fact is that the regression matrix $\bfpsi$ satisfies $\bfpsi\mathsf{1}_3=0$, hence it is singular and cannot satisfy the persistent excitation requirement
\begin{equation}\label{PEcond}
	\int_{t}^{t+T}\bfpsi(s)\bfpsi^\top(s)ds \geq \delta I_3,\;\forall t \geq 0,
\end{equation}
and some $T>0$ and $\delta>0$. Note also that in view of \eqref{eq:sum-phases}, only two of the three $\tt abc$ phases are required for the construction of such matrices.
\end{remark}

\section{An Observer-Based Composite Identifier}
\lab{sec4}
Clearly, a classical gradient descent (or least squares) algorithm can be applied to the LRE \eqref{lre}. However, in view of the lack of excitation indicated in Remark \ref{rem:not-PE}, poor performances are observed in simulations, that cannot be further improved by appropriate tuning of the estimator. 

In view of this situation we propose instead the use of the composite identifier of \cite{PANORTMOY}, which offers additional freedom in the design at the expense of an higher-order dynamics.
\smallbreak

\begin{proposition} 
\lab{pro1}
Consider the system \eqref{sysabc} and the LRE (\ref{lre}). Define the observer-based composite identifier:
\begin{equation}\label{ident}
\begin{aligned}
\dot{\hat i} & = - \alpha (\hat i-i)+ \bfpsi \hat \theta\\
\dot{\hat {\theta}} & = -\gamma_P\bfpsi^\top (\hat i-i)+ \gamma_I   \bfpsi_f^\top( \bfz - \bfpsi_f \hat \theta),
\end{aligned}
\end{equation}
with tuning gains $\alpha>0$, $\gamma_P>0$ and $\gamma_I>0$. Then the system trajectories are globally bounded and 
\begin{equation}\label{eq:convergence-x}
\lim_{t \to \infty}\hat {i}(t)=i(t).
\end{equation}
\end{proposition}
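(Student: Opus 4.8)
The plan is to pass to the error coordinates $\tilde{i}:=\hat i-i$ and $\tilde{\theta}:=\hat\theta-\theta$, recalling that $\theta$ is a constant vector. Substituting $\dot i=\bfpsi\theta$ from \eqref{doti} into the first line of \eqref{ident}, and using the LRE \eqref{lre} in the form $\bfz-\bfpsi_f\hat\theta=-\bfpsi_f\tilde\theta+\bfeps_t$ in the second line, I would obtain the cascade of error dynamics
$$
\dot{\tilde i}=-\alpha\tilde i+\bfpsi\tilde\theta,\qquad
\dot{\tilde\theta}=-\gamma_P\bfpsi^\top\tilde i-\gamma_I\bfpsi_f^\top\bfpsi_f\tilde\theta+\gamma_I\bfpsi_f^\top\bfeps_t.
$$
Before anything else I would record that the regressors are bounded: $v$ and $\mathcal{S}_0$ are bounded by Assumption~\ref{ass3} and by construction, while $i$ is the state of the asymptotically stable LTI plant \eqref{sysabc} driven by bounded inputs, so $\bfpsi$ in \eqref{pfpsi} is bounded; since $F(\calp)$ is a stable proper filter, $\bfpsi_f$ is bounded as well. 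This fact is needed repeatedly below.

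The core of the argument is a Lyapunov analysis with the weighted function
$$
V:=\tfrac{\gamma_P}{2}|\tilde i|^2+\tfrac{1}{2}|\tilde\theta|^2 .
$$
The weight $\gamma_P$ is chosen precisely so that the indefinite cross-terms cancel: differentiating along the error dynamics, the contributions $\gamma_P\,\tilde i^\top\bfpsi\tilde\theta$ and $-\gamma_P\,\tilde\theta^\top\bfpsi^\top\tilde i$ sum to zero, leaving
$$
\dot V=-\gamma_P\alpha|\tilde i|^2-\gamma_I|\bfpsi_f\tilde\theta|^2+\gamma_I(\bfpsi_f\tilde\theta)^\top\bfeps_t .
$$
I would then dominate the sign-indefinite perturbation by Young's inequality, $\gamma_I(\bfpsi_f\tilde\theta)^\top\bfeps_t\le\tfrac{\gamma_I}{2}|\bfpsi_f\tilde\theta|^2+\tfrac{\gamma_I}{2}|\bfeps_t|^2$, to conclude
$$
\dot V\le-\gamma_P\alpha|\tilde i|^2-\tfrac{\gamma_I}{2}|\bfpsi_f\tilde\theta|^2+\tfrac{\gamma_I}{2}|\bfeps_t|^2 .
$$

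Global boundedness follows because $\bfeps_t$ decays exponentially, hence $|\bfeps_t|^2\in\L2$ is integrable over $[0,\infty)$; from $\dot V\le\tfrac{\gamma_I}{2}|\bfeps_t|^2$ one integrates to bound $V(t)$ uniformly, so $\tilde i,\tilde\theta\in\Linf$, and since $i$ is bounded and $\theta$ constant, the estimator states $\hat i,\hat\theta$ are bounded. For the convergence claim \eqref{eq:convergence-x}, I would integrate the displayed inequality once more to get $\gamma_P\alpha\int_0^\infty|\tilde i|^2\,ds\le V(0)+\tfrac{\gamma_I}{2}\int_0^\infty|\bfeps_s|^2\,ds<\infty$, i.e.\ $\tilde i\in\L2$. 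Finally, $\dot{\tilde i}=-\alpha\tilde i+\bfpsi\tilde\theta$ is bounded (both errors and $\bfpsi$ are bounded), so $\tilde i$ is uniformly continuous; Barbalat's lemma then yields $\lim_{t\to\infty}\tilde i(t)=0$, which is \eqref{eq:convergence-x}.

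The main obstacle is exactly the lack of persistent excitation emphasized in Remark~\ref{rem:not-PE}: the parameter term enters $\dot V$ only through the degenerate quantity $|\bfpsi_f\tilde\theta|^2$, so the Lyapunov derivative is at best negative \emph{semi}definite and one cannot hope for $\tilde\theta\to0$. The delicate points are therefore (i) cancelling the cross-terms so that no indefinite $\tilde\theta$ contribution survives, and (ii) absorbing the exponentially decaying $\bfeps_t$ into an integrable residual without spoiling boundedness of $V$; these two together route the proof through the $\L2$/Barbalat argument for $\tilde i$ rather than through a strict Lyapunov decrease.
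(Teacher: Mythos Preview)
Your proof is correct and follows essentially the same route as the paper: the same error coordinates, the same (up to a scalar multiple) Lyapunov function $V=\tfrac{\gamma_P}{2}|\tilde i|^2+\tfrac{1}{2}|\tilde\theta|^2$, and the same derivative computation in which the cross-terms cancel. The only difference is that where the paper stops at $\dot{\mathcal V}$ and invokes \cite[Proposition~3.1]{PANORTMOY}, you spell out the concluding steps explicitly---boundedness of the regressors, Young's inequality to absorb the $\bfeps_t$ term, integrability to get $\tilde i\in\L2$, and Barbalat's lemma---which is precisely the content of that cited result.
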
\smallbreak

\begin{proof}
Let us define the incremental variables
$$\tilde i:=\hat i - i,\quad \tilde \theta:=\hat\theta - \theta.$$
Using \eqref{doti} and \eqref{lre} we obtain the error dynamics:
\begalis{
\dot{\tilde i} & = - \alpha \tilde i+ \bfpsi \tilde \theta\\
\dot{\tilde {\theta}} & = -\gamma_P \bfpsi^\top \tilde i- \gamma_I  \bfpsi_f^\top \bfpsi_f \tilde \theta+\bfpsi_f^\top\bfeps_t.
}
Consider the Lyapunov function candidate
$$
\mathcal V(\tilde i, \tilde \theta)= \hal |\tilde i|^2 + \frac{1}{2 \gamma_P}|\tilde \theta|^2.
$$
Some simple calculations show that
$$
\dot{\mathcal V}=-\alpha |\tilde i|^2 - \frac{\gamma_I}{\gamma_P}|\bfpsi_f \tilde \theta|^2+\frac{1}{\gamma_P}\tilde\theta^\top  \bfpsi_f^\top\bfeps_t.
$$ 
The proof is completed invoking \cite[Proposition 3.1]{PANORTMOY}.
\end{proof}\smallbreak
 
\begin{remark}\label{rem2}
If $\alpha=\gamma_P=0$, the observer-based composite identifier \eqref{ident} boils down to a classical gradient descent algorithm. For both solutions it would suffice the regression matrix $\bfpsi_f$ to be \textit{persistently exciting} to guarantee convergence of $\hat\theta(t)$ to the actual vector of parameters $\theta$---a condition that is unfortunately not verified, see Remark~\ref{rem:not-PE}. However, as it will be shown in Section~\ref{sec6}, a sufficiently small estimation error can be ensured by appropriate tuning of the parameters. 
\end{remark}

\section{Derivation of a Reduced LRE}\label{sec5}
In this section we derive a second-order LRE that can be used to solve to the problem of online identification of the TE circuit parameters under the following additional assumption. \smallbreak

\begin{assumption}
$R=0$ or, alternatively, the ratio $\rho:= L\omega/R$ is a \textit{known} constant.
\end{assumption}\smallbreak

This assumption can be justified whenever the power grid approximated by the TE circuit is dominantly inductive or it is characterized by transmission lines with a similar $X/R$ ratio---the latter implying that at the occurrence of a trip of one or several lines, the average ratio remains unchanged. A straightforward consequence of adding the {\bf Assumption A5} is that we can reduce the number of parameters to be estimated from three to two. Indeed, from the knowledge of the grid inductance $L$ we can compute the grid resistance from the equation $R=L\omega/\rho.$ \smallbreak

\begin{proposition}\label{proposition2}
 Consider the system \eqref{sysabc} verifying  {\bf Assumptions A1-A5}. There exist {\em measurable} signals  $\bfz_{\tt ab}(t)\in\rea^2$ and $\bfpsi_{f,\tt{ab}}(t) \in\rea^{2 \times 2}$ such that the following LRE is satisfied
\begin{equation}\lab{lre-ab}
\bfz_{\tt ab}=\bfpsi_{f,\tt{ab}} \vartheta + \bfeps_t,
\end{equation}
where
\begin{equation}\lab{barthe-ab}
\vartheta:=\col\left({1 \over L}, {E \over L} \right)\in \rea^2,
\end{equation}
and $\bfeps_t\in\rea^2$ is a signal exponentially decaying to zero.\\
Moreover, if $\phi\neq 0$, the trajectories generated by the gradient descent estimator
\begin{equation}\label{eq:GD}
    \dot{\hat\vartheta}=\gamma \bfpsi_{f,\tt ab}^\top(\bfz_{\tt ab}-\bfpsi_{f,\tt ab}\hat\vartheta),
\end{equation}
with tuning gain $\gamma>0$, verify
$$
\lim_{t\to\infty}\hat\vartheta(t)=\vartheta,\quad (exp).
$$
\end{proposition}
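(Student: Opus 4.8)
The plan is to prove the two claims in sequence, exploiting that {\bf Assumption A5} renders the first entry of the parameter vector $\theta$ of Lemma \ref{lem1} known, thereby collapsing the three-parameter LRE into a two-parameter one. First I would revisit the identity \eqref{doti}, namely $\calp[i]=\bfpsi\theta$ with $\bfpsi=\begmat{-i &|& v &|& -\mathcal{S}_0}$. Under {\bf A5} the ratio $R/L=\omega/\rho$ is a known constant, so the contribution of the first column is a \emph{measurable} signal that can be moved to the left-hand side, yielding
$$
\calp[i]+\frac{\omega}{\rho}\,i=\begmat{v &|& -\mathcal{S}_0}\,\vartheta,
$$
with $\vartheta$ as in \eqref{barthe-ab}. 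Applying the stable filter $F(\calp)=\lambda/(\calp+\lambda)$ to both sides and setting $\bar\bfz:=\big(\calp+\tfrac{\omega}{\rho}\big)F(\calp)[i]$ and $\bar\bfpsi_f:=F(\calp)\big[\begmat{v &|& -\mathcal{S}_0}\big]$ produces a $3\times 2$ LRE of the form \eqref{lre-ab}, the filter transient being absorbed into $\bfeps_t$. Finally, invoking the zero-sum property \eqref{eq:sum-phases} of Remark \ref{rem:not-PE}, the third phase is redundant; retaining the first two rows defines $\bfz_{\tt ab}\in\rea^2$ and $\bfpsi_{f,\tt ab}\in\rea^{2\times 2}$ and establishes the first claim.

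For the convergence claim, I would form the error $\tilde\vartheta:=\hat\vartheta-\vartheta$ and substitute \eqref{lre-ab} into \eqref{eq:GD} to obtain the linear time-varying error dynamics
$$
\dot{\tilde\vartheta}=-\gamma\,\bfpsi_{f,\tt ab}^\top\bfpsi_{f,\tt ab}\,\tilde\vartheta+\gamma\,\bfpsi_{f,\tt ab}^\top\bfeps_t.
$$
By the standard theory for gradient estimators, exponential convergence follows once $\bfpsi_{f,\tt ab}$ is shown to be persistently exciting, since the forcing term $\gamma\bfpsi_{f,\tt ab}^\top\bfeps_t$ is bounded and exponentially decaying.

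The crux---and the step I expect to be the main obstacle---is verifying excitation precisely when $\phi\neq0$. Since $F(\calp)$ maps a sinusoid of frequency $\omega$ to a scaled, phase-shifted sinusoid of the same frequency (modulo an exponentially decaying transient), in steady state the columns of $\bfpsi_{f,\tt ab}$ are the first two entries of $V|F(j\omega)|\mathcal{S}_{\phi+\psi}$ and $-|F(j\omega)|\mathcal{S}_{\psi}$, with $\psi:=\angle F(j\omega)$. Computing $\det\bfpsi_{f,\tt ab}$ via the product-to-sum identities, the time-varying terms cancel and one is left with a \emph{constant}
$$
\det\bfpsi_{f,\tt ab}=\tfrac{1}{\sqrt 3}\,V\,|F(j\omega)|^2\sin\phi,
$$
which is nonzero iff $\sin\phi\neq0$, i.e.\ iff $\phi\neq0$ on $(-\tfrac{\pi}{2},\tfrac{\pi}{2})$.

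Because the entries of $\bfpsi_{f,\tt ab}$ are bounded while its determinant is a nonzero constant, its smallest singular value is bounded away from zero uniformly in $t$; hence $\bfpsi_{f,\tt ab}^\top\bfpsi_{f,\tt ab}\ge\delta I_2$ for some $\delta>0$ and all sufficiently large $t$ (after the filter and $\bfeps_t$ transients decay), which is even stronger than \eqref{PEcond}. With this, the Lyapunov function $\mathcal V=\hal|\tilde\vartheta|^2$ satisfies $\dot{\mathcal V}\le-\gamma\delta|\tilde\vartheta|^2+\gamma\tilde\vartheta^\top\bfpsi_{f,\tt ab}^\top\bfeps_t$, and a routine comparison argument for an exponentially stable system driven by an exponentially decaying input yields $\lim_{t\to\infty}\tilde\vartheta(t)=0$ exponentially. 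The delicate part is the trigonometric cancellation showing the determinant is a phase-dependent \emph{constant}; everything downstream is standard.
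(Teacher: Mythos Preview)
Your proposal is correct and follows essentially the same route as the paper: move the known term $\tfrac{R}{L}i=\tfrac{\omega}{\rho}i$ to the left of \eqref{doti}, filter, keep two phases, and then show the steady-state regressor has a constant determinant proportional to $V\sin\phi$, yielding persistent excitation and exponential convergence of the gradient estimator. The only cosmetic differences are that the paper computes the determinant of the \emph{unfiltered} regressor $\bfpsi_{\tt ab}$ and then remarks that nonsingularity is preserved by $F(\calp)$, whereas you compute the determinant of $\bfpsi_{f,\tt ab}$ directly (picking up the $|F(j\omega)|^2$ factor and the common phase shift $\psi$, which cancels); and the paper simply cites standard adaptive-control references for the final convergence step where you spell out the Lyapunov argument.
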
\smallbreak

\begin{proof}
Let us consider the phase $\tt a$ and $\tt b$ from equation \eqref{sysabc}, so that we can write\footnote{Phases $a$ and $b$ are selected with no loss of generality, since a similar proof applies for a different choice of the phases.}
\begin{equation}\lab{doti-ab}
\frac{d}{dt}i_{\tt ab}+\frac{\omega}{\rho}i_{\tt ab} =\bfpsi_{\tt ab} \vartheta,
\end{equation}
where we defined the two-dimensional square matrix
\begin{equation}\lab{pfpsi-ab}
\bfpsi_{\tt ab}:=\begmat{v_{\tt ab} & | & -\mathcal{S}_{0,\tt{ab}} }.
\end{equation}
The first part of the proof is then completed, similar to Lemma~\ref{lem1}, by applying to \eqref{doti-ab} the LTI, stable filter
$$
	F(\calp)={\lambda \over \calp+\lambda},
$$
with $\lambda>0$ and defining
\begalis{
\bfz_{\tt ab}(t) & :=\Big(\calp+{\omega \over \rho}\Big)F(\calp)[i_{\tt ab}(t)]\\ 
\bfpsi_{f,\tt{ab}}(t)&:= F(\calp)[\bfpsi_{\tt ab}(t)].
}

To prove global exponential convergence, recall that
$$
\lim_{t\to\infty}\bfpsi_{\tt ab}=[V\mathcal S_{\phi,\tt ab}\;|\;-\mathcal{S}_{0,\tt ab}]
$$
and that, using standard trigonometric identities, we have

\begin{equation}\resizebox{0.49\textwidth}{!}{$
\begin{aligned}
&\det [V\mathcal S_{\phi,\tt ab}\;|\;-\mathcal{S}_{0,\tt ab}]= \\
&=V\left(\sin(\omega t)\sin(\omega t+\phi-\frac{2}{3}\pi)-\sin(\omega t+\phi)\sin(\omega t-\frac{2}{3}\pi)\right)\\
&=\frac{V}{2}\left(\cos(\phi-\frac{2}{3}\pi)-\cos(\phi+\frac{2}{3}\pi)\right)\\
&=\frac{\sqrt{3}}{2}V \sin(\phi),
\end{aligned}$}
\end{equation}
which is always non-zero, being $\phi\neq 0.$ Then we can claim that the matrix $\bfpsi_{\tt ab}$  eventually converges to a nonsingular matrix and that, since this property is preserved upon stable filtering, so does the regression matrix $\bfpsi_{f,\tt ab}$. By leveraging standard theory in parameters identification~\cite{NARENDRAbook} we conclude that $\bfpsi_{f,\tt ab}$ is persistently exciting and therefore global exponential convergence can be ensured via the gradient descent estimator~\eqref{eq:GD}.
\end{proof}\smallbreak

\begin{remark}
The property of persistency of excitation of the regression matrix $\bfpsi_{f,\tt ab}$ is guaranteed under the mild assumption $\phi\neq 0$, which corresponds to having a non-zero active power transferred between the power converter and the grid.
\end{remark}\smallbreak
 
 \begin{remark}
It is clear from the proof of Proposition \ref{pro1} that an observer-based composite identifier analogous to \eqref{ident} can be applied to the LRE \eqref{lre-ab}, at the cost of an higher-order dynamics. However, in view of the persistency of excitation of the regression matrix $\bfpsi_{f,\tt ab}$, this additional complication is unnecessary.
\end{remark}\smallbreak

\section{Simulations}
\lab{sec6} To validate the theoretical results we consider a voltage source converter with rated power of $1000$ MVA, interfaced to a \mbox{$400$ kV} transmission grid operating at the nominal frequency of $f=50$ Hz, with nominal SCR of $3$ and an X/R ratio of $\rho=5$. Accordingly, the nominal parameters of the TE are given as follows:
$$
R=10.68\;\Omega,\;L=169.77\;\textrm{mH},\;E=326.60\;\textrm{kV}.
$$
We suppose that the transmission grid under consideration is characterized by a relatively high inertia and therefore we can safely assume that the frequency $\omega$ remains constant over the time-scale of interest. As for the converter side, we consider the scenario where the converter is operating at its rated power and that a standard synchronous reference frame PLL is deployed to quickly recover the actual value of the frequency. The proportional and integral gains for the PLL are set respectively to $\kappa_P=2\cdot 10^2$ and $\kappa_I=5\cdot 10^3$. The system is developed in the Matlab/Simulink simulation environment, release R2021b.\\
\begin{figure*}
\centering
\includegraphics[width=0.48\textwidth]{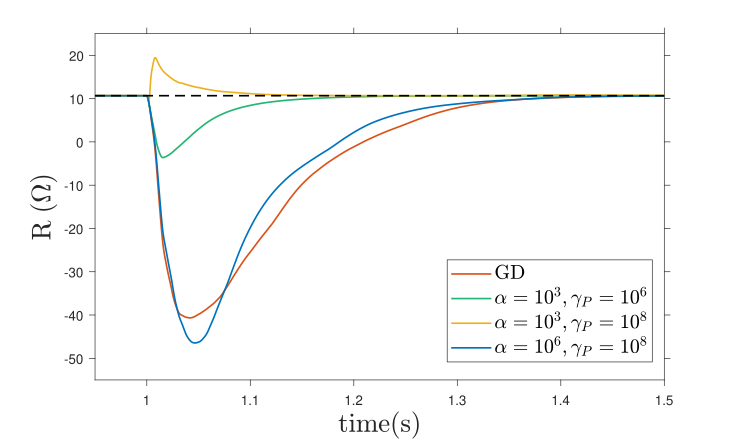}
\includegraphics[width=0.48\textwidth]{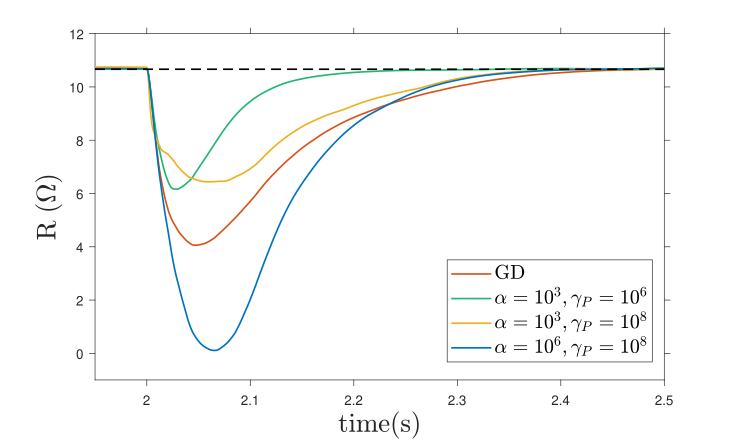}\\
\includegraphics[width=0.48\textwidth]{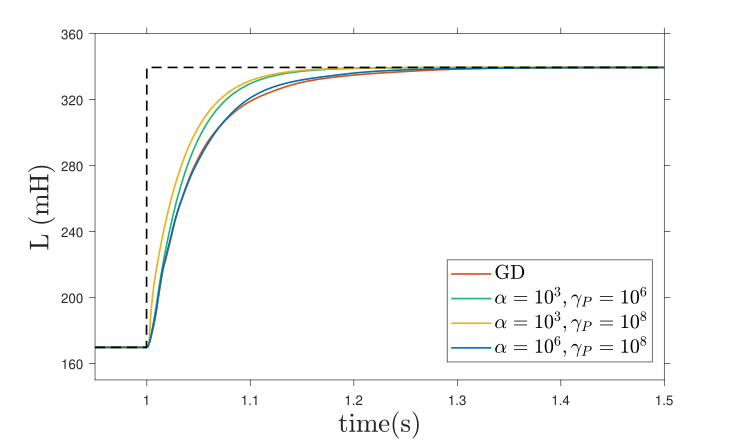}
\includegraphics[width=0.48\textwidth]{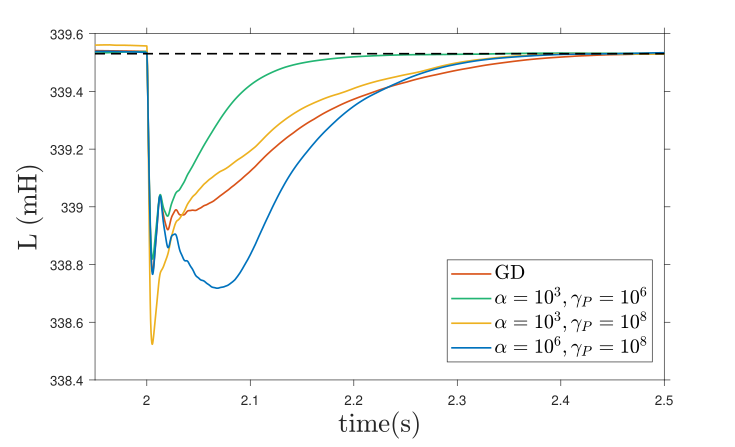}\\
\includegraphics[width=0.48\textwidth]{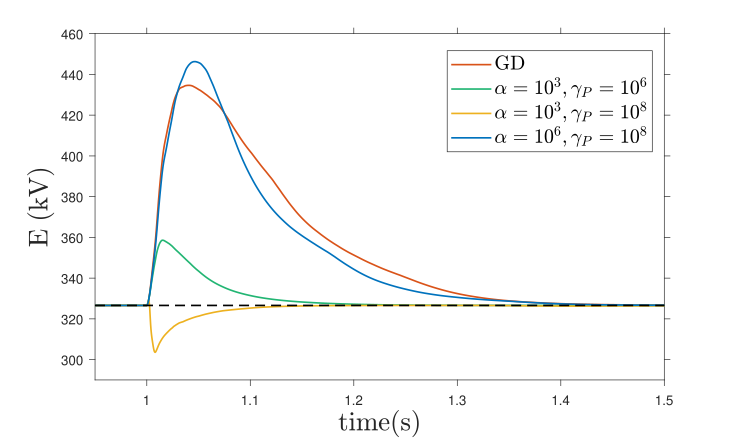}
\includegraphics[width=0.48\textwidth]{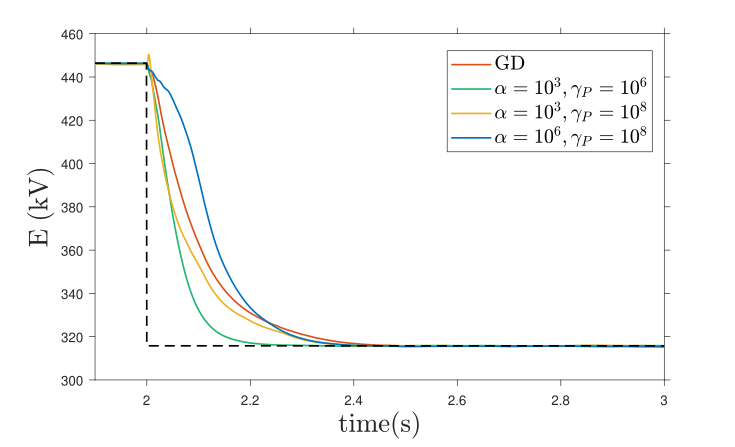}
     \caption{Estimates of the grid resistance, inductance and voltage amplitude (solid lines) obtained via the observer-based composite identifier \eqref{ident}, as compared to the corresponding, actual values (dotted lines), and following: a) a drop of the SCR from $3$ to $1.5$ at $T=1$ s (left); b) a drop of $10\%$ of the TE voltage source at $2T$ s (right).}  \label{fig:sims}
\end{figure*}
We evaluate the performance of the identifier following two types of perturbations occurring respectively at time $T=1$ s and $2T$ s. The first type of perturbation consists in a simultaneous change of the resistance and inductance of the TE, following a drop in the SCR from $3$ to $1.5$. The second type of perturbation takes the form of a drop of $10\%$ of the TE voltage source. Extensive simulations are realized for these two relevant scenarios, with the gains of the identifier \eqref{ident} set to fixed $\lambda=10^3$, $\gamma_I=10^8$ and variable $\alpha\in[0\;10^6]$, $\gamma_P\in[0\; 10^8]$. The obtained results for both scenarios and three illustrative gain pairs $(\alpha,\gamma_P)$ are reported in Fig. \ref{fig:sims}, where it is shown that for all such gains the identifier allows to correctly recover the actual values of the corresponding parameters. Unsurprisingly, we further observe that acceptable performances strongly depend on an appropriate selection of the gains of the identifier. A particular case is given whenever we pick  \mbox{$\alpha=\gamma_P=0$}, that is, the identifier coincides with a gradient descent (GD) algorithm, see also Remark~\ref{rem2}. We observe indeed that with this design, both a change of the impedance and of the voltage of the TE generate large peaks in the estimates provided by the identifier, resulting in convergence times superior to $350$ ms---performances that can be neatly improved with a different tuning of the gains. We thus conclude that the additional degrees of freedom provided by the observer-based composite identifier can be exploited to improve performances under all type of perturbations. \\
We complete this section briefly illustrating in Fig.~\ref{fig:GD} the performances of a gradient descent algorithm applied to the reduced LRE \eqref{lre-ab}, with tuning gain $\gamma=10^8$. In addition to the previously considered scenario, we evaluate the responses of the algorithm in case that a change of the X/R ratio, initially set to $\rho(0)=5$, may occur concurrently with the change of the SCR at time $T=1$ s, that is $\rho(t)=\{3,5,7\}$, for $t\geq T$. It is shown that, as long as the X/R ratio remains unchanged, the estimates fastly and asymptotically converge to their actual values---a fact that stems from Fig. \ref{fig:lambda} where we plot 
$$
\lambda_{\min}\Big\{\int_{0}^{t}\bfpsi_{f,\tt{ab}}(s)\bfpsi^\top_{f,\tt{ab}}(s)ds\Big\},
$$
which grows to infinity. As is well-known \cite[Proposition 4]{BARORT} the latter is a necessary condition for global convergence of the gradient estimator.  Nevertheless, whenever the ratio $\rho$ changes at $T=1\;s $, {\bf Assumption A5} is violated and large estimation errors are observed, suggesting that caution must be taken in employing such solution.

\begin{figure}
\centering
\includegraphics[width=0.48\textwidth]{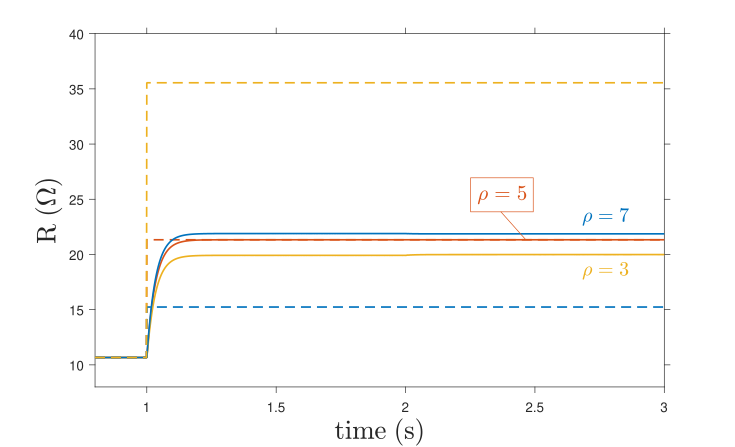}\\
\includegraphics[width=0.48\textwidth]{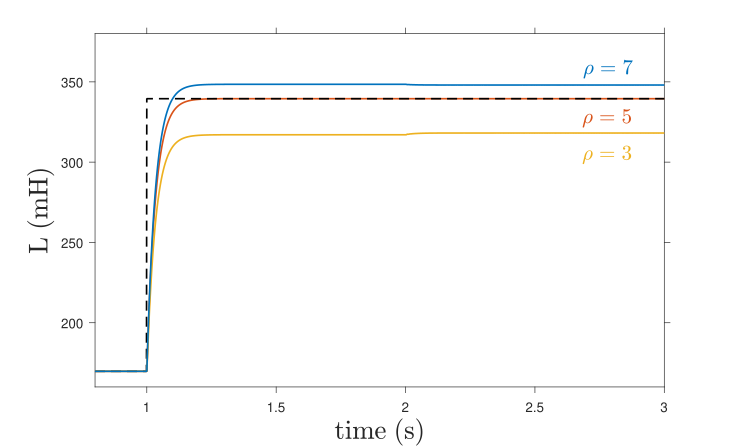}\\
\includegraphics[width=0.48\textwidth]{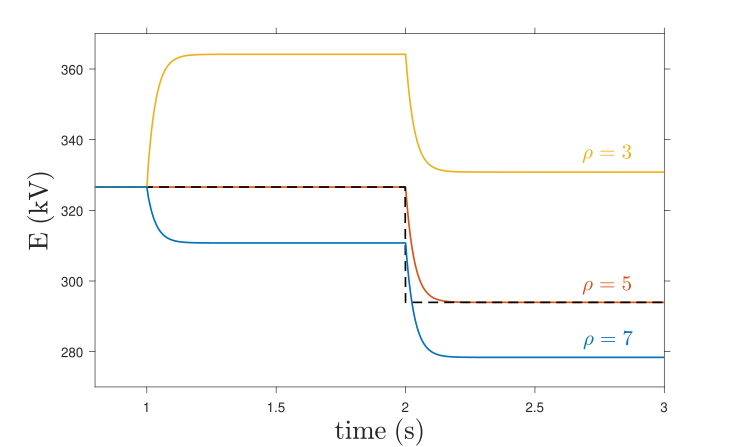}\\
     \caption{Estimates of the grid resistance, inductance and voltage amplitude (solid lines) obtained via a classical gradient descent algorithm applied to the LRE \eqref{lre-ab}, as compared to the corresponding, actual values (dotted lines), and following: a) a simultaneous change of the SCR from $3$ to $1.5$ and of the ratio $\rho$ from $5$ to $\{3,5,7\}$ at $T=1$ s; b) a drop of $10\%$ of the TE voltage source at $2T$ s.}  \label{fig:GD}
\end{figure}
\begin{figure}
    \centering
    \includegraphics[width=0.48\textwidth]{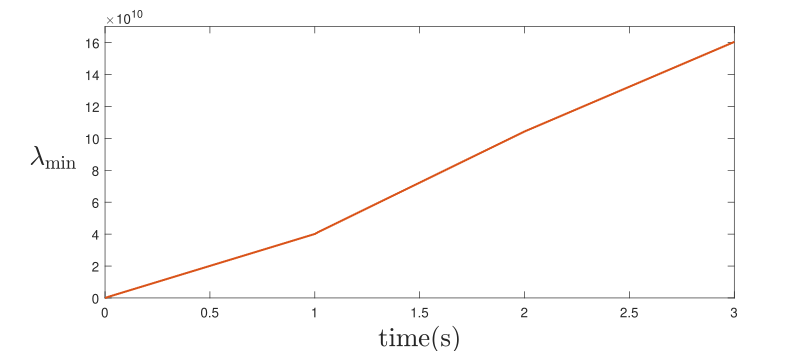}
    \caption{Minimum eigenvalue of the matrix $\int_{0}^{t}\bfpsi_{f,\tt{ab}}(s)\bfpsi^\top_{f,\tt{ab}}(s)ds$.}
    \label{fig:lambda}
\end{figure}

\section{Conclusions}\label{sec7}
In this paper we have addressed the problem of online identification of the parameters of a Th\'evenin equivalent grid, where this is adopted to describe a power system interfaced with a power electronic converter in closed-loop with an exponentially stabilizing controller. Based on this setting, we have derived a linear regression model for the system and next designed an observer-based composite identifier ensuring that the resulting estimates are bounded for all positive gains. An alternative, exponentially converging design is further proposed, assuming either a dominantly inductive grid or the \textit{a priori} knowledge of the X/R ratio---conditions that allow to generate a reduced linear regressor equation. The theoretical results are validated via simulations, which suggest that convergence and suitable performances can be enforced by an appropriate tuning of the gains.\\
Future works will explore solutions able to guarantee asymptotic convergence of the estimates to their actual values by lifting {\bf Assumption A5}. Validation of the obtained results on a detailed benchmark and related experiments are under progress and will be reported soon. 

\end{document}